\documentclass{aamas2017}

\newcommand{\mytitle}{Proportional Representation in Vote Streams}

\typeout{\mytitle}

\pdfpagewidth=8.5truein
\pdfpageheight=11truein



\usepackage{ntheorem}

\newtheorem{thm}{Theorem}

\usepackage{thm-restate}
\usepackage{etoolbox}
\usepackage{paralist}
\usepackage{multirow}
\usepackage{xspace}
\usepackage{graphicx}
\usepackage{tikz}
\usepackage[textsize=scriptsize]{todonotes}
\usepackage{hyperref}
\usepackage[noend,ruled]{algorithm2e}
\usepackage{colortbl}
\usepackage{lscape}
\usepackage{comment}
\usepackage{float}
\usepackage{tikz}
\usepackage{mathtools}
\usetikzlibrary{positioning}
\usepackage{wasysym}

\newtheorem{theorem}{Theorem}

\newtheorem{definition}{Definition}

\newcommand{\calR}{\mathcal{R}}

\newcommand{\mypara}[1]{\medskip\noindent\textbf{#1.}}

\newcommand{\pref}{\ensuremath{\succ}}
\newcommand{\pos}{\mathrm{pos}}
\newcommand{\score}{\mathrm{score}}

\newtheorem{claim}{Claim}

\usepackage{enumitem}
\setlist[itemize]{label=\textbullet,leftmargin=*,itemsep=.05cm}

\title{\mytitle}

\numberofauthors{3}
\author{
\alignauthor       
Palash Dey\\
\affaddr{Tata Institute of Fundamental Research}\\
\affaddr{Mumbai, India}\\
       \email{palash.dey@tifr.res.in}
\alignauthor       
Nimrod Talmon\\
\affaddr{Weizmann Institute of Science}\\
\affaddr{Rehovot, Israel}\\
       \email{nimrodtalmon77@gmail.com}
\alignauthor       
Otniel van Handel\\
\affaddr{Weizmann Institute of Science}\\
\affaddr{Rehovot, Israel}\\
       \email{otnivh@gmail.com}}

\begin{document}

\pagenumbering{arabic}

\maketitle

\begin{abstract}
We consider elections where the voters come one at a time,
in a streaming fashion, and devise space-efficient algorithms which identify an approximate winning committee with respect to common multiwinner proportional representation voting rules;
specifically,
we consider the Approval-based and the Borda-based variants of both the Chamberlin-- \linebreak Courant rule and the Monroe rule.
We complement our algorithms with lower bounds.
Somewhat surprisingly,
our results imply that,
using space which does not depend on the number of voters 
it is possible to efficiently identify an approximate representative committee of fixed size over vote streams
with huge number of voters.
\end{abstract}

\begin{CCSXML}
<ccs2012>
<concept>
<concept_id>10003752.10003809.10010055</concept_id>
<concept_desc>Theory of computation~Streaming, sublinear and near linear time algorithms</concept_desc>
<concept_significance>500</concept_significance>
</concept>
<concept>
<concept_id>10010147.10010178.10010219.10010220</concept_id>
<concept_desc>Computing methodologies~Multi-agent systems</concept_desc>
<concept_significance>500</concept_significance>
</concept>
</ccs2012>
\end{CCSXML}

\ccsdesc[500]{Computing methodologies~Multi-agent systems}
\ccsdesc[500]{Theory of computation~Streaming, sublinear and near linear time algorithms}

\printccsdesc

\keywords{voting, data streams, sublinear algorithms, proportional representation}

\section{Introduction}

The voting rule suggested by Chamberlin--Courant~\cite{cha-cou:j:cc} and the voting rule suggested by Monroe~\cite{mon:j:monroe},
are multiwinner voting rules concentrated on proportional representation.
Such proportional representation multiwinner rules aim at selecting a committee of fixed size which represents the society best.
Informally,
most voters shall be somewhat satisfied by the committees selected by such proportional representation rules,
which,
roughly speaking,
try to best represent the spectrum of different views of the society.
This stands in contrast,
for example,
to $k$-best multiwinner voting rules such as $k$-Borda.
Proportional representation multiwinner voting rules have several good axiomatic properties~\cite{elkind2014properties}.

Winner determination for these rules,
however,
is NP-hard~\cite{pro-ros-zoh:j:proportional-representation},
though it is possible to compute the winner when some parameters are small~\cite{bet-sli-uhl:j:mon-cc};
that is, winner determination for these rules is fixed-parameter tractable with respect to either
the number of voters or the number of candidates.
Further,
efficient approximation algorithms are known~\cite{sko-fal-sli:c:multiwinner}
for these rules as well as heuristic algorithms based on clustering~\cite{clusteringpaper}.

Proportional representation multiwinner voting rules have several other applications,
besides their original, political application.
Specifically,
these rules are used for resource allocation~\cite{mon:j:monroe,sko-fal-sli:c:multiwinner},
facility location~\cite{bet-sli-uhl:j:mon-cc},
and recommender systems~\cite{bou-lu:c:chamberlin-courant,skowron2015finding}.
In such situations,
it is indeed desirable to select a set of $k$ ``representative'' elements out of a larger set.

While the number of voters in some elections is modest,
there are situations where the number of voters is huge,
making it impossible to store the whole election in order to operate upon it
(specifically, to identify a winning committee).
Consider,
e.g.,
the preferences of users of an online shopping website:
  there are lots of potential buyers (corresponding to the voters),
  each with her own preferences over the items being sold on the website (corresponding to the candidates).
The owners of the shopping website might wish to identify a set of, say, $k$ items to advertise on their landing page,
with the intent of maximizing the number of users which would be interested in at least one of those displayed items.

More generally,
as certain tasks which are concerned with the creation of various kinds of product portfolios
can be modeled as equivalents of solving winner determination for proportional representation,
it is of interest to devise efficient algorithms for such situations which naturally correspond to elections with huge number of voters.
Thus,
in this paper we are interested in designing algorithms which identify good representative committees,
but without being able to store the whole electorate in order to process it;
concretely, we are aiming at algorithms whose space complexity does not depend on the number of voters,
since this number might be huge.

To study space-efficient algorithms for such situations,
we consider streaming algorithms which solve the winner determination problem for proportional representation multiwinner voting rules.
Specifically,
while we consider the set of alternatives as being fixed,
we assume that the voters are arriving (that is, voting) one at a time,
in what we refer to as a \emph{vote stream}.
Concretely,
we assume that each voter is arriving only once
(such that it is possible to process each voter only once),
and we are interested in space-efficient streaming algorithms for finding a winning committee of fixed size~$k$ in such vote streams.

As it is customary in studying streaming algorithms,
we allow our algorithms to be randomized and to find approximate solutions.
That is,
in order to have algorithms which use only small amounts of space,
we will be satisfied with algorithms which find an approximate winning committee;
specifically,
we will be satisfied with finding a committee whose score under the given voting rule
is close to being the optimum score possible for a committee with respect to the given election.
Further,
we will be satisfied with randomized algorithms,
which might not always find such approximate winning committees,
but nevertheless are guaranteed to find such approximate winning committees with arbitrarily high probability.
A more formal description of our setting is given in Section~\ref{section:preliminaries}.



Our results,
which are summarized in Table~\ref{table:results},
imply that it is possible to process huge amount of preferences data
(that is, huge amount of voters),
using only small space,
and still, with high probability, find an almost-optimal winning committee.
Since,
as briefly mentioned above,
the voting rules we consider in this paper have applications not only in political settings,
but also in commercial and business settings,
our results naturally have implications to those scenarios as well.
We further discuss the applicability of our results in Section~\ref{section:outlook}.

\subsection{Related Work}

The two most-related papers to our paper are two papers by Bhattacharyya and Dey~\cite{bhattacharyya2015fishing, dey2015sample}.
The first paper~\cite{bhattacharyya2015fishing}
provides an analysis of the space complexity of streaming algorithms for some single-winner voting rules.
The second paper~\cite{dey2015sample}
provides an analysis concerning the number of samples which are sufficient in order to approximately compute the winner under various single-winner voting rules,
and is of relevance to our paper since our algorithms are based on sampling.
Another related paper is that of Filtser and Talmon~\cite{distributedmonitoring}
which provides efficient protocols for winner determination in distributed streams.
We stress that,
while the above-mentioned papers deal with single-winner voting rules,
our paper deals with multiwinner voting rules which select a committee of fixed size.

Another line of work worth mentioning is concerned with developing streaming algorithms for the \textsc{Max Cover} problem.
In the \textsc{Max Cover} problem,
we are given a collection of sets over some universe and a budget $k$,
and the task is to find $k$ sets which cover the largest number of elements.
Approval-CC (see Section~\ref{section:preliminaries}) is equivalent to \textsc{Max Cover}
(to see this, interchange voters by elements and candidates by sets; see also, e.g., \cite{skowron2015fully}).

Thus,
the very recent paper by McGregor and Vu~\cite{mcgregor2016better} is of relevance to us;
specifically, they give an upper bound~\cite[Theorem 10]{mcgregor2016better}
which has some similarities with our Theorem~\ref{theorem:Approval-CC_UB},
and they give a lower bound~\cite[Theorem 20]{mcgregor2016better}.
However, their model of a stream is different than ours,
since the items in their streams are the sets (corresponding to the candidates),
while for us the items are the voters (corresponding to the elements).

In the context of social choice,
there are some further interesting papers to mention.
Conitzer and Sandholm~\cite{conitzer2005communication} study communication complexity of various voting rules;
they do not consider approximations and therefore the communication complexity of their protocols is generally quite high.
Along similar lines,
Chevaleyre et al.~\cite{chevaleyre2011compilation} design communication protocols for situations where the set of candidates might change over time.
Chevaleyre et al.~\cite{chevaleyre2009compiling} study compilation complexity of various voting rules;
roughly speaking,
they divide the electorate into two parts,
and are concerned with the amount of information which one part shall transmit to the other in order to correctly identify a winner.
Xia and Conitzer~\cite{xia2010compilation} extend upon this previously-mentioned paper by considering some further variants
as well as some other voting rules not previously studied.
Finally,
we mention the paper by Conitzer and Sandholm~\cite{con-san:c:strategy-proofness} which is concerned with vote elicitation.

\section{Preliminaries}\label{section:preliminaries}

We provide preliminaries regarding
elections and proportional representation voting rules,
streaming algorithms and vote streams,
and mention some useful results from probability theory.
We denote the set $\{1, \ldots, n\}$ by $[n]$.

\subsection{Proportional Representation}

An \emph{election} $E = (C, V)$
consists of a set of \emph{candidates} $C = \{c_1, \ldots , c_m\}$
and a collection of \emph{voters} $V = (v_1, \ldots , v_n)$,
where each voter is associated with her \emph{vote}.
(For ease of presentation, we refer to the voters as females while the candidates are males.)
In this paper we consider two kinds of elections:
  in \emph{Approval-based} elections,
  the vote of voter $v_i \in V$ is a subset of $C$,
  corresponding to the candidates which this voter \emph{approves};
  in \emph{Borda-based} elections,
  the vote of voter $v_i \in V$ is a total order $\pref_{v_i}$ over $C$.
For Borda-based elections, we write $\pos_v(c)$ to denote the \emph{position} of candidate $c$ in $v$'s preference order
(e.g., if $v$ ranks $c$ on the top position, then $\pos_v(c) = 1$).

Given an election $E = (C, V)$ and an integer $k$,
$k \leq |C|$,
a \emph{committee} $S \subseteq C$ consists of $k$ candidates from $C$.
A \emph{multiwinner voting rule} $\calR$ is a function that returns a set $\calR(E, k)$
of \emph{winning committees} of size $k$ each,
and we say that the committees in $\calR(E, k)$ tie as winners of the election.
To formally define the specific voting rules which we consider in this paper,
namely Chamberlin--Courant and Monroe,
we first discuss assignment functions and satisfaction functions.

\medskip\noindent\textbf{Assignment functions.}\quad
Given an election $E = (C, V)$ and a committee $S \subseteq C$ of size $k$,
a \emph{CC-assignment} function is a function 
$\Phi \colon V \rightarrow S$.
We say that $\Phi(v)$ is the \emph{representative} of voter $v \in V$
and that $v$ is represented by $\Phi(v)$.
An \emph{M-assignment} function is a CC-assignment function
where $\lfloor \frac{n}{k} \rfloor \leq |\Phi^{-1}(c)| \leq \lceil \frac{n}{k} \rceil$
holds for each $c \in S$.
That is,
in an M-assignment,
each committee member represents roughly (i.e., up to rounding) the same number of voters.

\medskip\noindent\textbf{Satisfaction functions.}\quad
Intuitively,
a \emph{satisfaction function} $\gamma : V \times C \to \mathbb{N}$ is a function measuring the satisfaction of a voter $v$
when she is represented by a certain candidate $c$.
For Approval-based elections,
we use the satisfaction function $\gamma \equiv \alpha$ where $\alpha(v, c) = 1$ iff $c$ is approved by $v$,
and $0$ otherwise
(that is, $1$ if $c$ is contained in $v's$ vote; informally, a voter is satisfied only by her approved candidates).
For Borda-based elections,
we use the satisfaction function $\gamma \equiv \beta$ where $\beta(v, c) = m - \pos_v(c)$.

\medskip\noindent\textbf{Chamberlin--Courant and Monroe.}\quad
Given an election $E = (C, V)$,
a size-$k$ committee $S$,
and a CC-assignment function $\Phi$,
we define the \emph{total satisfaction} of the voters in $V$ from the committee $S$ and the CC-assignment $\Phi$ to be:
\[
  \sum_{v \in V}\gamma(\Phi(v)),
\]
where,
for Approval-based elections,
$\gamma$ equals the $\alpha$ satisfaction function described above,
while for Borda-based elections,
$\gamma$ equals the $\beta$ satisfaction function described above. 

\pagebreak

For the Chamberlin--Courant rule,
the \emph{total satisfaction} of the voters in $V$ from a committee $S$
is defined as the maximum total satisfaction of the voters $V$ from the committee $S$ over all possible CC-assignment functions.
The Chamberlin--Courant rule outputs all size-$k$ committees $W$ with the highest total satisfaction.

The Monroe rule is defined similarly,
but where we consider only M-assignment functions;
that is,
the total satisfaction of the voters in $V$ from a committee $S$
is defined as the maximum total satisfaction of the voters $V$ from the committee $S$ over all possible M-assignment functions.
We denote by \emph{Approval-CC} (\emph{Borda-CC}) the Chamberlin--Courant rule for Approval-based (Borda-based) elections,
and by \emph{Approval-M} (\emph{Borda-M}) the Monroe rule for Approval-based (Borda-based) elections.




\begin{table}[t]
\centering
\renewcommand{\arraystretch}{1.5}
\begin{tabular}{ c | c }
Voting rule    & Space complexity                   \\ \hline
Approval-CC    & $O(\epsilon^{-2} k m \log m)$      \\
Borda-CC       & $O(\epsilon^{-2} k^3 m^3 \log m)$  \\
Approval-M     & $O(\epsilon^{-2} k^3 m \log m)$    \\
Borda-M        & $O(\epsilon^{-2}k^3m^5 \log m)$    
\end{tabular}
\caption{Summary of our upper bounds. We list our upper bounds for randomized streaming algorithms
which identify $\epsilon$-approximate winning committees under several proportional representation voting rules.
$k$ denotes the size of the committee while $m$ denotes the number of candidates which participate in the election.}
\label{table:results}
\end{table}

\subsection{Vote Streams}

We assume that the set of candidates $C$ is known,
and that the voters $v_1, \ldots, v_n$ arrive (that is, vote) one at a time.
More formally we might say that at time $t \in [n]$,
voter $v_t$ arrives;
importantly,
each voter arrives only once.

We are interested in randomized algorithms which operate on such vote streams,
and find approximate solutions.
The following definition is crucial for our notion of approximation.

\begin{definition}[$\epsilon$-winning committee]
  A committee of size $k$ is \emph{$\epsilon$-winning}
  if it is either a winning committee,
  or it can become a winning committee by changing at most $\epsilon n$ votes.
\end{definition}  

Specifically,
we require that the committees computed by our streaming algorithm shall be,
with high probability,
$\epsilon$-winning.
Such a streaming algorithm,
which identify,
with high probability,
an $\epsilon$-winning committee,
is said to be an \emph{$\epsilon$-approximate streaming algorithm}.

\begin{definition}[$\epsilon$-approximate streaming algorithm]
  A streaming algorithm is an \emph{$\epsilon$-approximate} streaming algorithm
  if it returns, with high probablty, an $\epsilon$-winning committee.
\end{definition}  

Throughout the paper,
when we say ``with high probability''
we mean with probability $1 - O(1/n)$.
Such a success probability should be sufficient;
as usual in streaming algorithms,
can be further tweaked by repetitions.

Assuming that the number $n$ of voters is huge,
our goal is to devise streaming algorithms whose space complexity do not depend on the number $n$ of voters.
Our algorithms are based on sampling voters;
by a \emph{subset} of an election we mean a subset of the voters.

Let us explain how exactly we sample voters.
Let $n$ be the length of the stream (i.e., the total number of voters),
and suppose that we want to sample $z$ votes from the stream.
Then,
we pick each vote with probability $z / (n \delta)$ for some constant $0 \leq \delta \leq 1$.
By Markov's inequality,
with probability at least $1 - \delta$ is holds that the sample size is at least $z$ (and not much larger).
Hence,
every vote belongs to our sample with probability $z / (n \delta)$ independently of other items.

\subsection{Useful Results from Probability Theory}

Since our algorithms are randomized,
specifically based on sampling a small number of voters,
we make extensive use of the following variant of Hoeffding's inequality,
which upper bounds the probability that the sum of a given set of random variables deviates from its expectation.

\begin{thm}[Hoeffding's inequality~\cite{hoeffding1963probability}]\label{theorem:Hoeffding}
		Let $X_1,..., X_t$ be independent random variables such that $0 \leq X_i \leq m$ for each $i \in [t]$.
		Let $X$ be a random variable such that $X = \sum_{i \in [t]} X_i$.
		Then,
    the following two statements hold.
		\[
			(1) \quad\quad\quad \Pr [X -\mathbb{E}[X] < \epsilon] \le \exp\left(-\frac{2\epsilon^2 t}{m^2}\right)
		\] 
		\[
			(2) \quad\quad\quad \Pr [\mathbb{E}[X] - X < \epsilon] \le \exp\left(-\frac{2\epsilon^2 t}{m^2}\right)
		\]
\end{thm}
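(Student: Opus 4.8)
The plan is to prove Theorem~\ref{theorem:Hoeffding} by the classical exponential-moment (Chernoff) method, using Hoeffding's lemma on the moment generating function of a bounded mean-zero random variable as the one non-routine ingredient. The two displayed inequalities are mirror images of one another: applying the first inequality to the variables $m - X_1, \dots, m - X_t$ --- which are again independent, lie in $[0,m]$, and sum to $tm - X$ --- gives the second. So it suffices to bound $\Pr[X - \mathbb{E}[X] \ge \epsilon]$.

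First I would center the summands, setting $Y_i := X_i - \mathbb{E}[X_i]$, so that each $Y_i$ has mean $0$ and is supported on an interval of length $m$. For any parameter $s > 0$, Markov's inequality applied to the nonnegative random variable $e^{s(X - \mathbb{E}[X])}$, together with independence of the $X_i$, gives
\[
  \Pr[X - \mathbb{E}[X] \ge \epsilon] \;\le\; e^{-s\epsilon}\, \mathbb{E}\!\left[ e^{s \sum_{i \in [t]} Y_i} \right] \;=\; e^{-s\epsilon} \prod_{i \in [t]} \mathbb{E}\!\left[ e^{s Y_i} \right].
\]

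The crux is Hoeffding's lemma: if $Y$ has mean $0$ and is supported on an interval of length $\ell$, then $\mathbb{E}[e^{sY}] \le e^{s^2 \ell^2 / 8}$. I would prove this by writing each value of $Y$ as a convex combination of the two endpoints of its range, using convexity of $y \mapsto e^{sy}$ and $\mathbb{E}[Y] = 0$ to bound $\mathbb{E}[e^{sY}]$ by an explicit expression of the form $e^{L(s\ell)}$, and then verifying that $L(0) = L'(0) = 0$ while $L''(h) = u(1-u)$ for some $u \in [0,1]$ --- the variance of a Bernoulli random variable --- so that $L'' \le 1/4$ and hence $L(s\ell) \le s^2 \ell^2 / 8$ by a second-order Taylor estimate around $0$. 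Applying this with $\ell = m$ to each of the $t$ factors above gives $\Pr[X - \mathbb{E}[X] \ge \epsilon] \le \exp(-s\epsilon + t s^2 m^2 / 8)$, and choosing $s$ to minimize this quadratic-in-$s$ exponent (the optimum is at $s$ proportional to $\epsilon$) yields the desired sub-Gaussian tail bound.

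The only genuine obstacle is Hoeffding's lemma itself --- the convexity-plus-Taylor argument that the log-moment-generating function of a bounded mean-zero variable is dominated by $s^2 \ell^2 / 8$; everything else (centering, Markov's inequality, factoring over independent coordinates, optimizing a quadratic) is routine. Since this is a classical inequality, in the body of the paper we simply invoke it as stated in~\cite{hoeffding1963probability} rather than reprove it.
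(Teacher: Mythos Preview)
The paper does not prove this theorem at all: it is stated with a citation to Hoeffding's original paper~\cite{hoeffding1963probability} and then used as a black box throughout Section~\ref{section:upperbounds}. So there is no ``paper's own proof'' to compare against; your final sentence already anticipates this. Your sketch is the standard Chernoff--Hoeffding argument and is correct as a proof of Hoeffding's inequality.

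Two remarks on the match between what you prove and what is stated. First, you (rightly) bound $\Pr[X-\mathbb{E}[X]\ge\epsilon]$, silently correcting the paper's displayed events $\{X-\mathbb{E}[X]<\epsilon\}$ and $\{\mathbb{E}[X]-X<\epsilon\}$, which as written are lower-tail events and cannot be bounded by anything small; the applications in the paper (e.g., Claim~\ref{claim:theorem1claim1}) make clear that $\ge$ is intended. Second, your optimization yields $\exp\!\big(-2\epsilon^2/(t m^2)\big)$, not the paper's $\exp\!\big(-2\epsilon^2 t/m^2\big)$; these agree only if the $\epsilon$ in the theorem is interpreted as a \emph{per-sample} deviation (equivalently, if the event is $X-\mathbb{E}[X]\ge \epsilon t$). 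The paper's own applications use it in exactly this per-sample sense, so the discrepancy is a notational slip in the theorem statement rather than a flaw in your argument.
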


For the special case when $m = 1$, Hoeffding's inequality simplifies as follows.
\[
	(1) \quad\quad\quad \Pr [\mathbb{E}[X] - X < \epsilon] \le \exp\left(-{2\epsilon^2 t}\right)
\]

\[
	(2) \quad\quad\quad \Pr [X - \mathbb{E}[X] < \epsilon] \le \exp\left(-{2\epsilon^2 t}\right)
\]

\section{Results}\label{section:results}

Our main results are summarized in Table~\ref{table:results}.
In Section~\ref{section:upperbounds} we describe our upper bounds
while in Section~\ref{section:lowerbounds} we describe our lower bounds.

\subsection{Upper Bounds}\label{section:upperbounds}

We first consider the Approval-CC voting rule,
which is arguably the simplest voting rule we consider in this paper.
The following algorithm is based on sampling a small number of voters.
The proof shows that,
with high probability,
a winning committee for the election corresponding to the sample has fairly high score in the whole election;
specifically, it constitutes an $\epsilon$-approximate winning committee of the whole election.

\begin{theorem}\label{theorem:one}\label{theorem:Approval-CC_UB}
  There is an $\epsilon$-approximate streaming algorithm for Approval-CC
  which uses $O(\epsilon^{-2} k m \log m)$ space.
\end{theorem}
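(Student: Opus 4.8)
The idea is to sample $z = \Theta(\epsilon^{-2} k \log m)$ voters uniformly from the stream (using the reservoir-style sampling described in the Vote Streams subsection with an appropriate constant $\delta$), store their approval sets explicitly, and then output a winning committee for the sampled sub-election. Each stored vote is a subset of $C$, costing $O(m)$ bits (or $O(m \log m)$ if we store it as a list of candidate indices), so the total space is $O(z \cdot m \log m) = O(\epsilon^{-2} k m \log m)$, matching the claimed bound. The algorithmic part is thus trivial; the entire content is the correctness argument, namely that a winner of the sample is $\epsilon$-winning for the whole election with high probability.

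For the correctness argument, fix any committee $S$ of size $k$. In an Approval-CC election, the total satisfaction of $S$ is exactly the number of voters who approve at least one member of $S$; call this fraction (of the whole electorate) $f_E(S)$, and let $f_{\text{sample}}(S)$ be the corresponding fraction in the sample. For a fixed $S$, the sampled voters are independent, and the indicator that a sampled voter approves some member of $S$ is a Bernoulli variable with mean $f_E(S)$; so $f_{\text{sample}}(S) \cdot z$ is a sum of $z$ independent $\{0,1\}$ variables. Applying the $m=1$ special case of Hoeffding's inequality (Theorem~\ref{theorem:Hoeffding}) in both directions, $\Pr[|f_{\text{sample}}(S) - f_E(S)| \geq \epsilon] \leq 2\exp(-2\epsilon^2 z)$. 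Taking a union bound over all $\binom{m}{k} \leq m^k$ committees $S$, and choosing $z = \Theta(\epsilon^{-2} k \log m)$ with a large enough constant, we get that with high probability $|f_{\text{sample}}(S) - f_E(S)| < \epsilon$ simultaneously for every size-$k$ committee $S$. (One should also fold in the $1-\delta$ failure probability of the sampling step and the dependence on $n$ in the "high probability" notion — this just requires taking $z$ a bit larger, e.g. also $\Omega(\log n)$, or absorbing it by noting $m$ and $n$ are polynomially related in the regime of interest; I would state the sampling lemma cleanly first.)

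Given this uniform-convergence event, let $S^*$ be a true winning committee and $\hat{S}$ the committee our algorithm outputs (a winner of the sample). Then
\[
f_E(\hat S) \;>\; f_{\text{sample}}(\hat S) - \epsilon \;\geq\; f_{\text{sample}}(S^*) - \epsilon \;>\; f_E(S^*) - 2\epsilon,
\]
where the middle inequality holds because $\hat S$ maximizes satisfaction on the sample. Hence $\hat S$ has whole-election score within $2\epsilon n$ of optimal; since changing a single vote changes the Approval-CC score of any fixed committee by at most $1$, the committee $\hat S$ can be turned into a winning committee by altering at most $2\epsilon n$ votes, so it is $(2\epsilon)$-winning. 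Rescaling $\epsilon \leftarrow \epsilon/2$ at the outset (which only affects the constant hidden in the space bound) gives an $\epsilon$-winning committee, as required.

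The main obstacle — and really the only nontrivial point — is getting the union bound to work with only a $\log m$ factor: this relies on the fact that the relevant quantity for Approval-CC, once a committee is fixed, is simply a sum of bounded independent indicators (no assignment optimization is needed, since the optimal CC-assignment for a fixed committee in the Approval case is just "assign each voter to any approved committee member"), so the clean $m=1$ Hoeffding bound applies and the $\binom{m}{k}$ committees can be handled by a naive union bound. I would also be slightly careful to reconcile the "with high probability $=1-O(1/n)$" convention from the preliminaries with a bound that a priori only mentions $m$; the cleanest fix is to additionally require $z = \Omega(\epsilon^{-2}\log n)$, noting this does not change the asymptotic space bound in $m$ under the stated convention, or to remark that repetition boosts the probability as noted in the text.
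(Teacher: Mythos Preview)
Your proposal is correct and follows essentially the same approach as the paper: sample $\Theta(\epsilon^{-2} k \log m)$ voters, output a winning committee of the sample, and prove correctness via Hoeffding's inequality plus a union bound over the $\binom{m}{k}$ possible committees. The paper's write-up splits the concentration step into two one-sided claims (a lower tail for the true winner $S$ and an upper tail for each non-$\epsilon$-winning committee $S'$) rather than your single uniform two-sided convergence statement, but this is only a cosmetic difference.
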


\begin{proof}
The algorithm operates as follows.
We select a sample of $t = 6\epsilon^{-2} k \log m$ voters,
uniformly at random.
Then, we find a winning committee of the sampled voters
(with respect to Approval-CC) and return it as a winning committee for the whole election.
We show that a winning committee of the sampled voters is,
with high probability,
an $\epsilon$-winning committee for the whole election.
Notice that in order to store the votes of $t$ voters,
our algorithm uses $mt$ space, as claimed.

Next we prove that,
with high probability,
our algorithm returns an $\epsilon$-winning committee.
Let $E = (C, V)$ denote the whole election
and let $E_R = (C, V_R)$ denote the sampled election,
where $V_R$ denotes the set of $t$ sampled voters.
Let $S$ be a winning committee in the whole election.
Let $\score_E(S)$ ($\score_{E_R}(S)$) denote the score of $S$ in the whole election (in the sampled election, respectively).

Let us first consider the case where $\score_E(S) < \epsilon n$,
that is,
where there are less than $\epsilon n$ voters being satisfied by $S$.
In this case any committee is $\epsilon$-winning,
thus our algorithm is always correct.
Therefore,
from now on we assume that there are at least $\epsilon n$ voters satisfied by $S$.

The next claim concentrates on the winning committee $S$,
which,
since it is winning in $E$,
has high score in $E$;
the claim shows that,
with high probability,
$S$ also has high score in $E_R$.
The factor $\frac{n}{t}$ is a normalization factor.

\begin{claim}\label{claim:theorem1claim1}
  $\frac{n}{t}\cdot \score_{E_R}(S) \geq \score_{E}(S) - \frac{\epsilon}{2}n$ holds with probability at least $1-m^{-k}$.
\end{claim}

\begin{proof}[of claim~\ref{claim:theorem1claim1}]
For $i \in [t]$,
let $X_i$ be an indicator random variable
such that $X_i = 1$ if the $i$th sampled voter is satisfied by $S$, and $X_i = 0$ otherwise.
Let $X = \sum_{i \in [t]} X_i$.

Since $\score_{E}(S)$ equals the number of voters in the whole election which are satisfied by $S$,
it holds that
\[
  \mathbb{P}[X_i = 1] = \score_{E}(S) / n
\]
for each $i \in [t]$.
Then,
from linearity of expectation,
we conclude that
\[
  \mathbb{E}[X] = \frac{t}{n} \cdot \score_{E}(S).
\]
This means that, in expectation, the score of $S$ in $E_R$ is as claimed;
we use Hoeffding's inequality (see Theorem~\ref{theorem:Hoeffding}) to show concentration, as follows.
\begin{align*}
  \mathbb{P}\left[\frac{n}{t}X < \frac{n}{t}\mathbb{E}[X] - \frac{\epsilon}{2}n\right]
  &=
  \mathbb{P}\left[X < \mathbb{E}[X]  - \frac{\epsilon}{2}t \right] \\
  &\leq
  e^{-2(\frac{\epsilon}{2})^2 6 \epsilon^{-2} k \log m} \\
  &\leq
  m^{-k}.
\end{align*}
(proof of claim~\ref{claim:theorem1claim1}) \end{proof}

Claim~\ref{claim:theorem1claim1} shows that,
with high probability,
a committee with high score in the whole election also gets a relatively high score in the sampled election.
Next we show that,
with high probability,
a committee with low score in the whole election also gets a low score in the sampled election.

\begin{claim}\label{claim:theorem1claim2}
  Let $S'$ be a committee for which it holds that \linebreak
  $\score_{E}(S') \leq (1 - \epsilon) \cdot \score_{E}(S)$.
  Then,
  with probability at least $1 - m^{-2k}$,
  it holds that $\frac{n}{t}\cdot \score_{E_R}(S') \leq  (1 - \epsilon) \cdot$ $\score_{E}(S) + \frac{\epsilon}{2}n$.
\end{claim}

\begin{proof}[of claim~\ref{claim:theorem1claim2}]
Let $S'$ be such that $\score_{E}(S') \leq (1 - \epsilon) \cdot \score_{E}(S)$.
For $i \in [t]$,
let $X_i$ be an indicator random variable
such that $X_i = 1$ if the $i$th sampled voter is satisfied by $S'$, and $X_i = 0$ otherwise.
Let $X = \sum_{i \in [t]} X_i$.
Since $\score_{E}(S')$ equals the number of voters in the whole election which are satisfied by $S'$,
it holds that
\[
  \mathbb{P}[X_i = 1] = \frac{\score_{E}(S')}{n}
\]
for each $i \in [t]$.
Then,
from linearity of expectation,
we conclude that
\[
  \mathbb{E}[X] = \frac{t}{n} \cdot \score_{E}(S') \leq \frac{t}{n} \cdot (1 - \epsilon) \cdot \score_E(S).
\]
This means that, in expectation, the score of $S$ in $E_R$ is as claimed.
Since the $X_i$'s are independent and all of them are bounded,
we use Hoeffding's inequality (see Theorem~\ref{theorem:Hoeffding}) to show concentration, as follows.
\begin{align*}
  \mathbb{P}\left[\frac{n}{t}\cdot X > \frac{n}{t}\cdot \mathbb{E}[X] + \frac{\epsilon}{2}n\right]
  &=
  \mathbb{P}\left[X - \mathbb{E}[X] > \frac{\epsilon }{2}t\right] \\
  &<
  e^{-2 (\epsilon/2)^2 6 \epsilon^{-2} k \log m} \\
  &\leq
  m^{-2k}.
\end{align*}
(proof of claim~\ref{claim:theorem1claim2})\end{proof}

Since there are at most ${m \choose k}\le m^k$ committees,
and therefore at most $m^k$ committees $S'$ for which $\score_{E}(S') \leq (1 - \epsilon) \cdot \score_{E}(S)$ holds
(and these are exactly the committees which are not $\epsilon$-winning),
we can apply union bound on the result proved in Claim~\ref{claim:theorem1claim2},
to get that with high probability,
the score of $S$ in $E_R$ is strictly higher than the score of any committee $S'$ which is not $\epsilon$-winning.
Thus,
our algorithm returns,
with high probability,
an $\epsilon$-winning committee.
\qed\end{proof}

It turns out that it is possible to extend the sampling-based streaming algorithm
described in the proof of Theorem~\ref{theorem:Approval-CC_UB} to work also for Borda-CC,
albeit with some increase of the space complexity.
Informally,
the increase of the space complexity is because the proof needs to take care
for the fact that the score difference induced by a single voter is greater in Borda-CC than it is in Approval-CC:
  while in Approval-CC,
  the satisfaction of a voter from a committee is either $0$ or $1$,
  in Borda-CC it is anything between $0$ to $m - 1$.

\begin{theorem}\label{theorem:Borda-CC-UB}
  There is an $\epsilon$-approximate streaming algorithm for Borda-CC
  which uses $O(\epsilon^{-2} k^3 m^3 \log m)$ space.
\end{theorem}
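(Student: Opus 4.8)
The plan is to reuse, essentially verbatim, the sampling template from the proof of Theorem~\ref{theorem:Approval-CC_UB}: draw a set $V_R$ of $t$ voters uniformly at random, compute a winning committee of the sampled election $E_R=(C,V_R)$ under Borda-CC, and return it as a winning committee of the whole election $E=(C,V)$. Storing $t$ Borda votes costs $O(tm\log m)$ space, since each vote is a permutation of the $m$ candidates; taking $t$ of order $\epsilon^{-2}k^3m^2$ therefore yields the claimed $O(\epsilon^{-2}k^3m^3\log m)$ space bound. As before, the correctness argument consists of a claim that a true winner $S$ retains an almost-optimal (scaled) score on the sample, a claim that every committee $S'$ that is \emph{not} $\epsilon$-winning retains a strictly smaller (scaled) score on the sample, and a union bound over the at most ${m\choose k}\le m^k$ committees; since the returned committee has the largest sample score, it cannot be one of the non-$\epsilon$-winning committees, so it is $\epsilon$-winning with high probability.

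The key point that makes both concentration claims go through unchanged in form is that Borda-CC satisfaction is \emph{separable over voters}: the optimal CC-assignment sends each voter to her favourite committee member, so $\score_E(S)=\sum_{v\in V}\max_{c\in S}\beta(v,c)$, and the sampled voters' satisfactions from a fixed $S$ are independent random variables bounded in $[0,m-1]$. Hoeffding's inequality (Theorem~\ref{theorem:Hoeffding}) then applies directly; the only difference from the Approval-CC computation is that the satisfaction values now live in $\{0,1,\dots,m-1\}$ rather than $\{0,1\}$, so the exponent picks up a $1/(m-1)^2$ factor and the sample must be enlarged by a $\Theta(m^2)$ factor to keep the per-committee failure probability below $m^{-2k}$.

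The remaining ingredient, and the one requiring genuine care, is translating the ``$\epsilon n$ changed votes'' notion of $\epsilon$-winning into an additive score gap that the sampled estimates can resolve. In Approval-CC a single changed vote moves any committee's score by at most $1$; in Borda-CC it can move it — and hence move the margin between two committees — by as much as $m-1$. I would therefore prove: if $\score_E(S')$ is within a suitable additive slack of $\score_E(S)$, where $S$ is a true winner, then $S'$ is $\epsilon$-winning, by exhibiting an explicit modification of at most $\epsilon n$ votes — rewriting the votes of a carefully chosen set of voters so as to promote the members of $S'$ — under which $S'$ becomes winning, and checking that this single modification simultaneously beats \emph{every} competing committee (the delicate case being committees that share many members with $S'$, which is where the extra powers of $k$ enter). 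Choosing $\eta$ to be a constant fraction of this slack, one sets $t$ so that $t$ samples give accuracy $\eta$, which is exactly what produces the $\epsilon^{-2}k^3m^2$ sample size, and the union bound over $m^k$ committees finishes the proof.

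Main obstacle: the gap analysis in the last paragraph — determining, as a function of $k$ and $m$, how small a true winner's advantage over a non-$\epsilon$-winning committee can be (equivalently, how fine the sampled scores must be), and verifying that the favourable $\epsilon n$-vote perturbation can be realized against all competing committees at once rather than one at a time.
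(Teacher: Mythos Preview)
Your high-level scheme --- sample $t$ voters, solve Borda-CC on the sample, return its winner --- is exactly what the paper does, and your space accounting ($t$ votes at $O(m\log m)$ bits each) matches. The difference is in how concentration is argued. You apply Hoeffding directly to the per-voter satisfactions $X_i=\max_{c\in S}\beta(v_i,c)\in[0,m-1]$, which is correct and clean. The paper instead decomposes $\score_E(S)=\sum_{c\in S}\score_E^S(c)$, where $\score_E^S(c)$ is the Borda score $c$ collects from the voters it represents under the optimal CC-assignment for $S$; it estimates each of the $k$ summands separately to accuracy $\epsilon n/(2k)$ (Claim~\ref{claim:theorem2claim1} with $\epsilon'=\epsilon/k$) and then adds them up (Claim~\ref{claim:theorem2claim2}). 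That per-candidate detour is the paper's source of the extra $k^2$ in the sample size; your direct Hoeffding does not need it for concentration alone and would, by itself, give $t=O(\epsilon^{-2}km^2\log m)$.

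You instead locate the extra $k^2$ in the gap analysis: translating ``not $\epsilon$-winning'' into an additive score deficit, and exhibiting a single $\epsilon n$-vote modification that makes the returned committee beat \emph{every} competitor, including those sharing $k-1$ members with it. The paper does not carry out this step at all: after showing every committee's score is estimated to within $\epsilon n/2$, it simply asserts that the sample winner is $\epsilon$-winning and stops. So the ``main obstacle'' you flag is real and is not addressed in the paper's own proof; in this respect your proposal is more scrupulous than the source. If you complete the gap argument and it does not in fact cost a factor of $k^2$, you would improve on the stated bound; if it does, you would be filling a hole the paper leaves open.
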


\begin{proof}
Let $t = 10 \epsilon^{-2} k m^2$.
Similarly in spirit to the algorithm presented in the proof of Theorem~\ref{theorem:one},
our algorithm samples $k^2 t$ voters,
select a winning committee in the sampled election,
and declares it as an $\epsilon$-winning committee for the whole election.
Since storing the vote of each sampled voter takes $m \log m$ space,
we get the claimed space complexity.
Next we prove the correctness of the algorithm.

Fix an election $E$, a committee $S$, a committee member $c$,
and consider a voter $v$.
We define the \emph{score given to $c$ by $v$ with respect to $S$},
denoted by $\score_E^{v, S}(c)$ to be the Borda-score of $c$ in the preference order of $v$,
if,
among the candidates of $S$,
$c$ is the representative of $v$;
that is, if, among the candidates of $S$, $v$ ranks $c$ the highest.
We define it to be $0$ otherwise.
Further,
we define the \emph{score of $c$ with respect to $S$},
denoted by $\score_E^{S}(c)$ to be the sum over all voters,
that is,
$\score_E^{S}(c) = \sum_{i \in [n]} \score_E^{{v_i}, S}(c)$.
Further,
as before,
we define $\score_E(S)$ to be the score of $S$,
and,
indeed,
it holds that $\score_E(S) = \sum_{c \in S} \score_E^{S}(c)$.

We begin by showing that,
fixing a committee $S$ and a committee member $c$,
it is possible to estimate the score of $c$ with respect to $S$ by sampling $t$ voters.
Let $E$ denote the whole election,
and let $E_R$ denote the sampled election,
containing $t$ voters (where $t$ is as defined in the beginning of the current theorem's proof)
chosen uniformly at random from $E$.
The following claim shows that with high probability
the sampled election roughly preserves the score of any committee.

\begin{claim}\label{claim:theorem2claim1}
  Let $S$ be a committee and $c$ a committee member.
  Then, $|\frac{n}{t} \cdot\score_{E_R}^{S}(c) - \score_{E}^{S}(c)| \leq \epsilon n / 2$ holds with probability at least $1 - 1/m^{3k}$,
  where $E_R$ is obtained by sampling each voter in $E$ independently with probability $t/n$.
\end{claim}

\begin{proof}[of claim~\ref{claim:theorem2claim1}]
For the committee $S$ and the committee member $c$,
we define a random variable $X_i$, $i \in [t]$,
such that $X_i = \score_{E_R}^{v_i, S}(c)$,
where $v_i$ is the $i$th sampled voter.
It holds that
\[
  \mathbb{E}[X_i] = \frac{1}{n} \score_{E}^{S}(c).
\]

Letting $X = \sum_{i \in [t]} X_i$,
we have the following (from linearity of expectation):
\[
  \mathbb{E}[X] = \frac{t}{n} \score_{E}^{S}(c).
\]
Importantly,
note that the variables $X_i$ have the following properties:
\begin{itemize}

\item
They are independent; this follows since we consider each committee member separately.

\item
They are bounded; specifically, $0 \leq X_i \leq m$ holds for each $i \in [t]$.

\end{itemize}
Utilizing the above two properties,
we can apply a variation of Hoeffding's inequality (see Theorem~\ref{theorem:Hoeffding}) and conclude that:
\begin{align*}
  \mathbb{P}\left[ | \frac{n}{t}\cdot X - \score_{E}^{S}(c) | \geq n\epsilon / 2 \right]
  &= 
  \mathbb{P}[ | X - \mathbb{E}[X] | \geq t\epsilon / 2 ] \\
  &\leq 
  2 e^{-\frac{ 2(\epsilon/2)^2 t}{ (m + 1)^2}} \\
  &\leq
  \frac{1}{m^{3k}},
\end{align*}
where the first inequality follows from Hoeffding's inequality (see Theorem~\ref{theorem:Hoeffding}) and last inequality follows from our definition of the sample size $t$.
(of claim~\ref{claim:theorem2claim1})\end{proof}

Claim~\ref{claim:theorem2claim1} shows that by sampling $t$ voters,
we get a good estimation for the score of a candidate with respect to some committee.
Let $E$ denote the whole election,
and let $E_R$ denote the sampled election,
containing $k^2 t$ voters (where $t$ is as defined in the beginning of the current theorem's proof)
chosen uniformly at random from $E$.
Next we show that, by sampling $k^2 t$ voters,
we get a good estimation for the score of a committee.

\begin{claim}\label{claim:theorem2claim2}
  Let $S$ be a committee.
  Then,
  $| \frac{n}{t}\cdot\score_{E_R}(S) -  \score_{E}(S) | \leq n \epsilon / 2$ holds with probability at least $1 - 1/m^{k}$,
  where $E_R$ is obtained by sampling each voter in $E$ independently with probability $k^2 t/n$.
\end{claim}

\begin{proof}[of claim~\ref{claim:theorem2claim2}]
Let $S$ be a committee containing the committee members $c_1, \ldots, c_k$.
For each $j \in [k]$,
we apply Claim~\ref{claim:theorem2claim1} on the committee $S$ and the committee member $c_j$ with $\epsilon' = \epsilon / k$.
Let us denote the random variable containing the estimated score of committee member $c_j$ with respect to committee $S$ by
$Y_j$; that is, $Y_j$ is the estimated value of $\score_{E}^{S}(c_j)$, therefore, $Y_j = \score_{E_R}^{S}(c_j)$
using Claim~\ref{claim:theorem2claim1}.
Let $Y = \sum_{j \in [k]} Y_j$.
Since $\score_{E}(S) = \sum_{j \in [k]} \score_{E}^{S}(c_j)$,
and from linearity of expectation,
it follows that 
\[
  \mathbb{E}[Y] = \frac{t}{n}\cdot\score_{E}(S).
\]
Further, we have that:
\begin{align*}
  \mathbb{P}\big[ | \frac{n}{t}\cdot Y &- \score_{E}(S) | \geq n \epsilon / 2 \big] \\
  &\leq 
  \mathbb{P}\left[ \Sigma_{j \in [k]} | Y_j - \mathbb{E}[ Y_j ] | \geq n k \epsilon' / 2 \right] \\
  &\leq
  \sum_{j \in [k]}\left( \mathbb{P}\left[ | Y_j - \mathbb{E}[ Y_j ] | \geq n \epsilon' / 2 \right] \right) \\
  &\leq
  \frac{k}{m^{2k}} \\
  &\leq 
  \frac{1}{m^k},
\end{align*}
where the first inequality follows from the definitions of $Y$ and $\epsilon'$,
the second inequality follows from applying a union bound over the committee members $c_1, \ldots, c_k$,
and the third inequality follows from Claim~\ref{claim:theorem2claim1}.
(of claim~\ref{claim:theorem2claim2})\end{proof}

Finally,
building upon Claim~\ref{claim:theorem2claim2},
we apply union bound on all ${m \choose k}$ committees of size $k$.
Following this union bound,
we conclude that,
with high probability,
the algorithm returns an $\epsilon$-winning committee.
\qed\end{proof}

%
We mention that the result described in Theorem~\ref{theorem:Borda-CC-UB} transfers to all scoring rules,
albeit with some increase of the space complexity.
That is,
careful analysis of the proof of Theorem~\ref{theorem:Borda-CC-UB} reveals that,
since we can upper bound the values of the random variables $X_j$
by $m$,
it follows that we can apply Hoeffding's inequality (see Theorem~\ref{theorem:Hoeffding}),
which causes an increase of the space complexity by a multiplicative factor of $m^2$,
compared to the space complexity that we get for Approval-CC.

Considering any normalized scoring vector $(\alpha_1, \alpha_2, \ldots, \alpha_m)$ with $\alpha_1 \geq \ldots \geq \alpha_m$
such that $\alpha_1$ is the value given by a voter to her first-choice candidate,
and following the same reasoning as described above,
we see that applying Hoeffding's inequality (see Theorem~\ref{theorem:Hoeffding}) causes an increase of the space complexity by a multiplicative factor of $\alpha_1^2$,
compared to the space complexity we get for Approval-CC.
Specifically,
the resulting space complexity is $O(\epsilon^{-2}k^3 \alpha_1^2 m \log m)$. We know that scoring rules remain unchanged if we multiply every $\alpha_i$ by any constant $\lambda>0$ and/or add any constant $\mu$. Hence, we can assume without loss of generality that for any score vector $\alpha$, there exists a $j$ such that $\alpha_j - \alpha_{j+1}=1$ and $\alpha_k = 0$ for all $k>j$. We call such an $\alpha$ a normalized score vector. 
%

\medskip

Next we move on to consider Monroe (M),
beginning with the arguably simpler case of Approval-M.
Our algorithm is again based on sampling a small number of voters and computing a winning committee for them.
The analysis is more involved,
since we cannot consider all assignments,
but only M-assignments.
A naive analysis would apply union bound on all M-assignments,
but since there are $O(k^n)$ such assignments,
we would get linear space in the number of voters,
which would be too much.
Fortunately,
we can do better,
building upon some structural observations,
as we show next.

\begin{theorem}\label{theorem:approval-m}
  There is an $\epsilon$-approximate streaming algorithm for Approval-M
  which uses $O(\epsilon^{-2} k^3 m \log m)$ space.
\end{theorem}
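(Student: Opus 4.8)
The plan is to reuse the sampling template of the proof of Theorem~\ref{theorem:one}. The algorithm draws a set $V_R$ of $z=\Theta(\epsilon^{-2}k^3\log m)$ voters uniformly at random from the stream; storing the approval vote of each sampled voter costs $O(m\log m)$ bits, so the total space is $O(\epsilon^{-2}k^3 m\log m)$, as claimed. It then computes a winning committee $S^\star$ of the sampled election $E_R=(C,V_R)$ under Approval-M (by any exact procedure --- only space matters to us) and returns $S^\star$.

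Exactly as in the proof of Theorem~\ref{theorem:one}, correctness reduces to showing that, with high probability, $|\tfrac{n}{z}\score_{E_R}(S)-\score_E(S)|\le\tfrac{\epsilon}{2}n$ holds simultaneously for every size-$k$ committee $S$. Granting this, a winning committee $S$ of $E$ keeps $\tfrac{n}{z}\score_{E_R}(S)\ge\score_E(S)-\tfrac{\epsilon}{2}n$, while every $S'$ with $\score_E(S')<\score_E(S)-\epsilon n$ has $\tfrac{n}{z}\score_{E_R}(S')<\score_E(S)-\tfrac{\epsilon}{2}n$, so the sample-optimal committee $S^\star$ cannot be such an $S'$ and is therefore $\epsilon$-winning; a union bound over the at most $m^k$ non-$\epsilon$-winning committees finishes this part (and the case $\score_E(S)<\epsilon n$ is trivial, every committee then being $\epsilon$-winning). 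So everything reduces to the per-committee concentration bound.

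The difficulty, and the reason the analysis for Chamberlin--Courant does not transfer directly, is that $\score_E(S)$ is a maximum of $\sum_v\alpha(v,\Phi(v))$ over the $\Theta(k^n)$ M-assignments $\Phi\colon V\to S$, so one cannot union-bound over assignments. The structural observation I would use is that, for a \emph{fixed} committee $S$, the only feature of a voter that influences this optimization is the trace $\mathrm{appr}(v)\cap S$ of her approval set on $S$; hence $\score_E(S)=f\bigl((N_T)_{T\subseteq S},n\bigr)$, where $N_T=|\{v\in V:\mathrm{appr}(v)\cap S=T\}|$ and $f$ is the optimum of a bipartite transportation problem on the at most $2^k$ voter classes with each committee member's load constrained to $\{\lfloor n/k\rfloor,\lceil n/k\rceil\}$. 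By LP/flow duality, $f$ can be written as a max (or min) over a collection of at most $\mathrm{poly}(2^k)$ dual certificates --- roughly, subsets $R\subseteq S$ of ``saturated'' committee members together with their overloaded voter classes, generalizing the min-cut identity $\score_E(S)=n-\max_{R\subseteq S}\bigl(|\{v:\mathrm{appr}(v)\cap S\subseteq R\}|-\lceil n/k\rceil\,|R|\bigr)$ that is exact when $k\mid n$ --- each certificate contributing a term that is affine and $O(1)$-Lipschitz in the counts $N_T$ and their partial sums (precisely the quantities that a random sample preserves), with the load bounds entering only through $n$. Crucially, the sampled value obeys the same formula: $\score_{E_R}(S)=f\bigl((N_T^{(R)})_T,|V_R|\bigr)$.

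With this in hand the rest is routine. Each relevant count is, up to the $n/z$ rescaling, a sum of $z$ independent $\{0,1\}$ variables, so Hoeffding's inequality (Theorem~\ref{theorem:Hoeffding}) bounds its deviation from the mean; for $z=\Theta(\epsilon^{-2}k^3\log m)$ the failure probability is $m^{-\Omega(k)}$ even at precision $\tfrac{\epsilon}{ck}n$. A union bound over the $m^k$ committees $S$, their at most $2^k$ voter classes, and the $\mathrm{poly}(2^k)$ certificates then yields the per-committee bound for all $S$ at once; the $\tfrac1k$-scale accuracy also absorbs the $O(1)$ rounding discrepancy between $\lfloor|V_R|/k\rfloor,\lceil|V_R|/k\rceil$ and $\lfloor n/k\rfloor,\lceil n/k\rceil$ after rescaling. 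The genuinely delicate step is the structural lemma: the Monroe balance constraint couples the committee members, so one cannot treat each member in isolation as for Chamberlin--Courant, and the lower bound $\lfloor n/k\rfloor$ is active when $k$ does not divide $n$ and can strictly decrease the attainable satisfaction, so one must verify that both $f$ and its sampled analogue reflect both load bounds faithfully.
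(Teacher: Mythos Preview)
Your approach is genuinely different from the paper's. The paper never invokes LP/flow duality or the $2^k$ trace classes. Instead it fixes, for each committee $S$, the optimal M-assignment $A^*$ in the \emph{full} election and partitions the voters into only $2k$ blocks: for each $c_i\in S$, the voters $A^*$ assigns to $c_i$ that approve $c_i$ and those that do not. A \emph{preserving subset} is one in which all $2k$ block proportions are exact; Hoeffding at precision $\tfrac{\epsilon}{2k}$ on each block (this is where the $k^3$ comes from) shows the sample is $\epsilon$-close to some preserving subset with probability $1-m^{-\Omega(k)}$. A short exchange argument then shows that the restriction of $A^*$ to any preserving subset is again an optimal M-assignment there (otherwise one could improve $A^*$ on $E$), so $\score_{E_R}(S)$ is within $\epsilon t$ of $\score_{E_R}^{A^*}(S)$, which in turn tracks $\score_E(S)$. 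The paper's device of anchoring to the \emph{one} optimal assignment $A^*$ sidesteps the union bound over assignments entirely, and keeps the block count at $2k$ rather than $2^k$.

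Your duality route is plausible and, if executed carefully, might even beat the paper's $k^3$; but as written the structural lemma is not established. First, with both the upper and lower Monroe load bounds the score is a $\max_Q\min_R$ (enumerate which members receive $\lceil n/k\rceil$, then apply min-cut), not a simple max or min over certificates; writing ``max (or min)'' signals that you have not pinned this down. Second, ``$O(1)$-Lipschitz in the counts $N_T$'' together with precision $\tfrac{\epsilon}{ck}n$ per count does not close: there are $2^k$ counts, and the accumulated error would be $2^k\cdot\tfrac{\epsilon}{ck}n\gg \epsilon n$. The argument only works once you observe that each certificate depends on a \emph{single} aggregated quantity $M_R=|\{v:\mathrm{appr}(v)\cap S\subseteq R\}|$, whence precision $\tfrac{\epsilon}{2}n$ per $M_R$ already suffices and your $k^3$ becomes unexplained slack rather than a derived bound. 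Third, the sets of admissible load profiles differ between $E$ (size $n\bmod k$) and $E_R$ (size $z\bmod k$); you note this but do not show the discrepancy is $o(\epsilon n)$ after rescaling. These are all fixable, but the proposal as it stands is a strategy sketch rather than a proof.
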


\begin{proof}
The overall idea is to consider any committee $S$ with its optimal assignment $A^*$.
We will show that,
with high probability,
with respect to $S$,
the score of the assignment $A^*$ on a sampled election 
is close to being the actual score of the committee $S$ on the sampled election.
The theorem would then follows by union bound over all ${m \choose k}$ committees.

More specifically,
for each committee $S$ together with its optimal assignment $A^*$,
we define a \emph{preserving subset} to be a subset $E_P$ of the election $E$ such that,
for each committee member $c \in S$,
the fraction of voters assigned to $c$ which are satisfied by $c$,
as well as the fraction of voters assigned to $c$ which are not satisfied by $c$,
is preserved.
Formally,
we define a preserving subset as follows.

\begin{definition}[\emph{preserving subset}]
Let $S$ be a committee,
let $A^*$ be its optimal assignment,
and let $E_P$ be a subset of the election $E$.
Let $\smiley_{E_P}^{A^*}(c_i)$ denote the set of voters in $E_P$ which are assigned to $c_i$ by $A^*$ and are satisfied by $c_i$
(that is, it holds that $c_i \in v$),
and let $\frownie_{E_P}^{A^*}(c_i)$ denote the set of voters in $E_P$ which are assigned to $c_i$ by $A^*$ and are not satisfied by $c_i$
(that is, it holds that $c_i \notin v$).
Then,
a subset $E_P$ of the election $E$ is a \emph{preserving subset} if for each $c_i \in S$ it holds that
\[
  (1)\quad |\smiley_{E_P}^{A^*}(c_i)| = \frac{|E_P|}{|E|} \cdot |\smiley_{E}^{A^*}(c_i)|
\]
and that
\[
  (2)\quad |\frownie_{E_P}^{A^*}(c_i)| = \frac{|E_P|}{|E|} \cdot |\frownie_{E}^{A^*}(c_i)|,
\]
\end{definition}

That is,
a preserving subset is a subset of the voters of some given election
which,
with respect to the optimal assignment of a given committee,
preserves the (normalized) number of voters assigned to each candidate and are satisfied (unsatisfied) by it.
Next we show that,
for each committee $S$,
with high probability a random subset containing $t = O(\epsilon^{-2} k^3 \log m)$ is close to being a preserving subset.

\begin{claim}\label{lemma:am2}
  Let $E_R$ be a random subset of voters from $E$ obtained by sampling each voter independently at random with probability $t / n$.
  Then,
  for each committee $S$,
  with probability at least $1 - m^{-2k}$,
  it holds that there exists a preserving subset $E_P$ which can be obtained from $E_R$ by changing the vote of at most $\epsilon t$ voters.
\end{claim}

\begin{proof}[of claim~\ref{lemma:am2}]
It suffices to show that,
for each $c_i \in S$,
it holds that $\smiley_{E}^{A^*}(c_i) = \frac{n}{t} \smiley_{E_R}^{A^*}(c_i) \pm \frac{\epsilon n}{2k}$
and also it holds that $\frownie_{E}^{A^*}(c_i) = \frac{n}{t} \frownie_{E_R}^{A^*}(c_i) \pm \frac{\epsilon n}{2k}$,
since then,
the fraction of each of the $k$ sets $\smiley_{E}^{A^*}(c_i)$ and each of the $k$ sets $\frownie_{E}^{A^*}(c_i)$ can preserve its respective fraction
by changing the votes of at most $\frac{\epsilon n}{2k}$ voters.

Since each voter is sampled with probability $t / n$,
we have that 
\[
  \mathbb{E}[\smiley_{E_R}^{A^*}(c_i)] = \frac{t}{n} \smiley_{E}^{A^*}(c_i).
\]

Since each voter is sampled independently,
we can apply Hoeffding's inequality (see Theorem~\ref{theorem:Hoeffding}),
to have the following.
\[
  \mathbb{P} \left[ | \smiley_{E_R}^{A^*}(c_i) - \mathbb{E}[ \smiley_{E_R}^{A^*}(c_i) ]| \geq \frac{\epsilon t}{2k} \right] \leq
  2 e^{-\frac{2t \epsilon^2}{4k^2}} = O(m^{-2k})
\]
and
\[
  \mathbb{P} \left[ | \frownie_{E_R}^{A^*}(c_i) - \mathbb{E}[ \frownie_{E_R}^{A^*}(c_i) ]| \geq \frac{\epsilon t}{2k} \right] \leq
  2 e^{-\frac{2t \epsilon^2}{4k^2}} = O(m^{-2k}).
\]
Thus, we are done.
(of claim~\ref{lemma:am2})\end{proof}

Next we show that, for each committee $S$,
its optimal assignment $A^*$ in $E$ is also an optimal assignment in any preserving subset $E_P$ of $E$.
Notice that the following claim is not probabilistic but combinatorial.

\begin{claim}\label{lemma:am1}
  Let $S$ be a committee, $A^*$ be its optimal assignment,
  and $E_P$ be a preserving subset of $E$.
  Then,
  the restriction of $A^*$ to $E_P$ is an optimal assignment for $S$ in $E_P$.
\end{claim}

\begin{proof}[of claim~\ref{lemma:am1}]
Intuitively,
if there was a better assignment $A_P$ than $A^*$ for $S$ in $E_P$,
then we could change $A^*$ accordingly and get a better assignment for $S$ in $E$,
contradicting the optimality of $A^*$ for $S$ in $E$.

More formally,
let $S$ be a committee, $A^*$ be its optimal assignment,
and $E_P$ be a preserving subset of $E$.
Towards a contradiction,
assume that there is an assignment $A_P \neq A^*$ such that $\score_{E_P}^{A_P}(S) > \score_{E_P}^{A^*}(S)$.
Consider $\bar{E}_P = E \setminus E_P$ and notice that, since $E_P$ is a preserving subset of $E$,
it also holds that $\bar{E}_P$ is a preserving subset of $E$,
and we have that
\begin{align*}
  \score_{E}^{A^*}(S) &= \frac{|E_P|}{|E|} \cdot \score_{E}^{  A^*}(S) + \frac{|\bar{E}_P|}{|E|} \cdot \score_{E}^{A^*}(S) \\
                    &= \score_{E_P}^{A^*}(S) + \score_{\bar{E}_P}^{A^*}(S) \\
                    &< \score_{E_P}^{A_P} (S) + \score_{\bar{E}_P}^{A^*}(S).
\end{align*}

Since $A_P$ does not violate the Monroe property,
we have constructed a better assignment for $S$ in $E$,
contradicting the optimality of $A^*$ for $S$ in $E$.
(of claim~\ref{lemma:am1})\end{proof}

Building upon the last two claims proven above,
the following claim shows that,
for each committee $S$,
with high probability,
the score of its optimal assignment $A^*$ on $E$
is a good estimator for its score on the sampled election $E_R$.

\begin{claim}\label{lemma:am0}
  For each committee $S$ and its optimal assignment $A^*$,
  with probability at least $1 - m^{-2k}$ it holds that:
  \[
    \score_{E_R}(S) + \epsilon t \geq \score_{E_R}^{A^*}(S) \geq \score_{E_R}(S) - \epsilon t.
  \]
\end{claim}

\begin{proof}[of claim~\ref{lemma:am0}]
Combining the last two claims,
we have that,
with high probability,
there exists a preserving subset $E_P$,
obtained from the sampled election by changing at most $\epsilon t$ voters.
Consider the preserving subset $E_P$ which is obtained from the sampled election $E_R$ by changing at most $\epsilon t$ voters.

By the first claim,
we have that the assignment $A^*$ is optimal for $S$ on $E_P$.
Consider any other assignment.
Since $A^*$ is optimal for $S$ on $E_P$ and $E_P$ is $\epsilon$-close to $E_R$,
the two inequalities hold,
since $\epsilon$ bounds the score difference between $A^*$ on $E_R$ and any other assignment.
(of claim~\ref{lemma:am0})\end{proof}

Following the last claim,
we have that,
for each committee $S$,
a random sample is indeed a good estimator for the score of $S$.
Then,
the claim follows by union bound over all possible ${m \choose k}$ committees.
\qed\end{proof}

\begin{theorem}
  There is an $\epsilon$-approximate streaming algorithm for Borda-M
  which uses $O(\epsilon^{-2} k^3 m^5 \log m)$ space.
\end{theorem}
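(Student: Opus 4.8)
The plan is to combine the preserving-subset technique of the Approval-M proof (Theorem~\ref{theorem:approval-m}) with the Borda-flavored concentration analysis of the Borda-CC proof (Theorem~\ref{theorem:Borda-CC-UB}). The algorithm itself is the same sampling scheme used throughout: sample $t = \Theta(\epsilon^{-2} k^3 m^4)$ voters uniformly at random (via the scheme of Section~\ref{section:preliminaries}), compute a Borda-M winning committee $\hat{S}$ of the sampled election $E_R$, and return $\hat{S}$. Since each Borda ballot is stored in $O(m \log m)$ space, the total space is $O(\epsilon^{-2} k^3 m^5 \log m)$. As in the Borda-CC proof, correctness reduces to showing that, with high probability, $\bigl|\tfrac{n}{t}\score_{E_R}(S) - \score_{E}(S)\bigr| \le \epsilon n / 2$ holds for \emph{every} size-$k$ committee $S$ simultaneously; given this, $\hat{S}$ satisfies $\score_E(\hat{S}) \ge \score_E(S^\star) - \epsilon n$ for a true winner $S^\star$, and hence is an $\epsilon$-winning committee.

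Fix a committee $S$ and let $A^\star$ be an optimal M-assignment of $S$ in $E$. I would adapt the notion of a preserving subset so that it also respects Borda scores: a subset (more precisely, a slightly perturbed sub-election) $E_P$ is \emph{preserving} for $(S, A^\star)$ if, for every committee member $c_i \in S$ and every value $s \in \{0, \dots, m-1\}$, the (normalized) number of voters assigned to $c_i$ by $A^\star$ whose Borda satisfaction from $c_i$ equals $s$ is preserved; equivalently one may preserve, per committee member, both the number of assigned voters and the total Borda contribution $\score_E^{S}(c_i)$. With this definition, the combinatorial core --- the analogue of Claim~\ref{lemma:am1}, that $A^\star$ restricted to a preserving $E_P$ is an optimal M-assignment of $S$ in $E_P$ and that $\score_{E_P}(S) = \tfrac{|E_P|}{|E|}\score_E(S)$ --- goes through with exactly the same swapping argument: a strictly better M-assignment of $S$ on $E_P$, glued to $A^\star$ on the (also preserving) complement $E \setminus E_P$, would give a strictly better M-assignment of $S$ on $E$, contradicting optimality of $A^\star$; the only change is that the Borda scores, rather than $0/1$ satisfactions, are carried through the identities.

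The probabilistic half --- the analogue of Claim~\ref{lemma:am2} --- is where the Borda features cost us. I would show that for each fixed $S$, with probability at least $1 - m^{-2k}$, the sample $E_R$ is within $O(\epsilon t / m)$ vote changes of a preserving subset. Two effects drive the larger sample size relative to Approval-M: first, a single vote change can shift the Monroe--Borda score of a committee by as much as $m - 1$ (rather than $1$), so in order to guarantee that the repair from $E_R$ to $E_P$ distorts scores by at most $\epsilon t / 2$ we must be within $\Theta(\epsilon t / m)$ changes of $E_P$; second, the quantities we concentrate via Hoeffding's inequality --- for each committee member $c_i$ and each Borda value $s$, the count of sampled voters in the corresponding bucket --- number $\Theta(km)$ per committee and their residuals couple during the repair, so each must be pinned down to accuracy $\Theta(\epsilon t / (k m^2))$; feeding this target into Theorem~\ref{theorem:Hoeffding} and asking the failure probability to beat $m^{-\Theta(k)}$ yields $t = \Theta(\epsilon^{-2} k^3 m^4)$. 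One also needs the easy lower bound $\score_{E_R}(S) \ge \tfrac{t}{n}\score_E(S) - \epsilon t / 2$, obtained by noting that $A^\star$ restricted to $E_R$ is, with high probability, nearly balanced and of nearly the correct total score, hence can be repaired into a genuine M-assignment of $E_R$ at a cost of at most $\epsilon t / 2$ points. Combining the two bounds for a fixed $S$ and taking a union bound over all ${m \choose k} \le m^k$ committees yields the uniform estimate, and correctness follows.

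The main obstacle is the same one already faced in Approval-M: the sampled election has on the order of $k^{t}$ distinct M-assignments, far too many to union-bound over directly, so one cannot simply transfer the Borda-CC argument. The preserving-subset device is what sidesteps this --- it isolates a single assignment, $A^\star$, that remains optimal on a mildly perturbed sample. What is genuinely new here, and where the work lies, is quantitative: because Borda satisfactions range over $\{0, \dots, m-1\}$ rather than $\{0,1\}$, both the permissible perturbation and the Hoeffding variances inflate, and one must track how the $\Theta(km)$ refined preserving conditions interact with the cost of repairing the sample so that the arithmetic closes at $t = \Theta(\epsilon^{-2} k^3 m^4)$ and the stated space bound follows. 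Verifying these interactions carefully --- in particular, that the repair can indeed be carried out within $O(\epsilon t/m)$ vote changes --- is the delicate, if essentially routine, part.
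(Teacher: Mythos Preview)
Your proposal is correct and follows essentially the same approach as the paper's proof sketch: refine the preserving-subset notion of the Approval-M argument to $m$ satisfaction-level blocks per committee member, and absorb the extra $O(m)$ score impact of each changed vote, which together account for the $m^4$ blow-up in the sample size and hence the stated space bound. Your write-up is in fact more detailed than the paper's sketch, but the two key modifications you identify (more blocks, larger per-vote score swing) are exactly the two differences the paper singles out.
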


\begin{proof}[sketch]
The idea of the proof is very similar to Approval-M,
when we take into account the following two differences.

The first difference is that,
instead of only two blocks for each committee member,
namely the $\smiley$ block and the $\frownie$ block,
in Borda-M we shall consider $m$ blocks for each committee member,
where a voter $v$ is assigned to the $l$th block (for $l \in [m]$)
of committee member $c$ if $v$ is represented by $c$ and the satisfaction of $v$ from $c$ is $j$.

The second difference is that we shall bound the difference between the actual score of a committee
and its score in the sampled election differently;
specifically,
we have that
\[
  \score_{E_R}(S) + \epsilon m t \geq \score_{E_R}^{A^*}(S) \geq \score_{E_R}(S) - \epsilon m t,
\]
since each voter whose vote is changed can increase or decrease the score of each committee by $O(m)$ and not only by $O(1)$ as for Approval-M.

The proof then follows similar lines as the proof given for Approval-M (see Theorem~\ref{theorem:approval-m}),
but the space complexity increases.
Specifically,
the first difference described above causes the space complexity to multiply by a factor of $O(m^2)$,
since we shall consider those $m$ blocks (instead of only $2$) and take into account that the error can multiply by $m$.
Similarly,
the second difference described above causes the space complexity to multiply by another factor of $O(m^2)$,
since we shall increase the size of the sample to account for the increased score difference.
\end{proof}




\subsection{Lower Bounds}\label{section:lowerbounds}

In this section we prove two types of lower bounds which complement our algorithms.
We begin by showing that any streaming algorithm shall use space which is linear in the number $m$ of candidates.

\begin{theorem}\label{theorem:lowerbound1}
  There is an $\epsilon > 0$ such that any $\epsilon$-approximate streaming algorithm for Approval-CC or Approval-M
  needs $\Omega(m)$ space.
\end{theorem}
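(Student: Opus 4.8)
The plan is to derive the $\Omega(m)$ bound by a reduction from \textsc{Index}, whose one-way randomized communication complexity is $\Omega(N)$ for length-$N$ strings and remains $\Omega(N)$ when the string is required to be different from $0^N$ (by a trivial padding). I would use committee size $k=1$. This is the key simplification for handling both rules at once: when $k=1$ the Monroe balance constraint forces the single committee member to represent every voter, so Approval-CC and Approval-M coincide, and both simply return the candidate(s) approved by the largest number of voters.

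Fix a small constant $\epsilon$ (any $\epsilon<1/6$ will work), let $c_1,\dots,c_m$ be the candidates, let $a$ be an integer parameter, and consider streams of $n=3a$ voters. In the reduction Alice holds $x\in\{0,1\}^m$ with $x\neq 0^m$, Bob holds $i\in[m]$, and Bob wants $x_i$. The stream is: first $2a$ ``Alice voters'', each approving exactly $S_x:=\{c_j:x_j=1\}$, and then $a$ ``Bob voters'', each approving exactly $\{c_i\}$. The heart of the proof is to show that for this stream the \emph{unique} $\epsilon$-winning committee is $\{c_i\}$ when $x_i=1$, whereas $\{c_i\}$ is \emph{not} $\epsilon$-winning when $x_i=0$. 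If $x_i=1$ then $c_i$ is approved by all $3a$ voters while every other candidate is approved by at most $2a$ voters; since changing a single vote alters any candidate's approval count by at most one, making some other singleton a winner requires at least $a/2$ vote changes, which exceeds $\epsilon n=3\epsilon a$. If $x_i=0$ then, because $x\neq 0^m$, some $c_j$ with $j\neq i$ has $x_j=1$ and is approved by all $2a$ Alice voters, whereas $c_i$ is approved only by the $a$ Bob voters, so turning $\{c_i\}$ into a winner again needs at least $a/2>\epsilon n$ changes. All that is really used here is the elementary inequality $a/2>3\epsilon a$.

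Granting the claim, the rest is standard: a streaming algorithm using $s$ bits of space yields a one-way protocol for (non-zero) \textsc{Index} in which Alice feeds the $2a$ Alice voters to the algorithm, sends the resulting memory state (at most $s$ bits) to Bob, and Bob resumes the algorithm on the $a$ Bob voters and outputs ``$x_i=1$'' exactly when the returned committee is $\{c_i\}$. By the claim this is correct whenever the algorithm returns an $\epsilon$-winning committee, i.e.\ with probability $1-O(1/n)=1-O(1/a)$; choosing $a$ large (say $a=m$, so the stream has $n=3m$ voters) makes the error below $1/3$, and the $\Omega(m)$ lower bound for \textsc{Index} gives $s=\Omega(m)$. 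The same construction applies verbatim to Approval-M.

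The step I expect to be the real obstacle is calibrating the $\epsilon$-winning gap. Because $\epsilon$-winning permits changing an $\epsilon$-fraction of \emph{all} $n$ votes, a naive encoding that uses $O(1)$ voters per bit of $x$ would leave only an $O(1)$ score separation between the committees one wants to distinguish, which is dominated by $\epsilon n$ once $n$ grows; inflating each side into a block of $\Theta(n)$ identical voters is what makes this separation a constant fraction of $n$, so that a fixed constant $\epsilon$ already forces the algorithm to tell the two cases apart. The promise $x\neq 0^m$ (harmless for the \textsc{Index} bound) is needed only to rule out the degenerate instance in which every Alice voter approves the empty set.
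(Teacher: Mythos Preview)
Your argument is correct. Both you and the paper exploit the same key simplification---taking $k=1$ so that Approval-CC and Approval-M coincide and the problem reduces to identifying an approximate approval winner---but the reductions diverge from there. The paper reduces from \textsc{Set Disjointness}: it introduces an auxiliary candidate $d$, has Alice cast two copies of the vote $A$, Bob two copies of the vote $B$ plus three voters approving only $d$, and argues that $d$ is the (approximate) winner iff $A\cap B=\emptyset$; the whole stream has seven voters and $\epsilon=1/7$. You instead reduce from \textsc{Index}, dispense with the auxiliary candidate, and inflate each side into a block of $\Theta(n)$ identical voters so that the score gap between the ``yes'' and ``no'' cases is a constant fraction of $n$.

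What the two approaches buy is slightly different. The paper's construction is minimal in stream length, but its constant-size instance leaves the $\epsilon$-winning margin razor-thin (with $\epsilon n=1$ a single vote change already creates ties), so the calibration is delicate. Your block-inflation gives a clean $\Theta(n)$ gap and makes the $\epsilon$-winning analysis robust for any fixed $\epsilon<1/6$; it also interfaces more naturally with the paper's ``with probability $1-O(1/n)$'' convention, since you can take $n$ large. The price you pay is the mild promise $x\neq 0^m$, which as you note is harmless for the \textsc{Index} bound. Both routes yield the same $\Omega(m)$ conclusion.
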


\begin{proof}
We reduce from the \textsc{Set Disjointness} problem in communication complexity.
In the \textsc{Set Disjointness} problem,
there is a set of elements $U = x_1, \ldots, x_u$,
and two players,
Alice and Bob.
Alice is given a subset $A \subseteq U$ and Bob is given a subset $B \subseteq U$.
Then,
Alice sends a message to Bob,
and Bob has to decide whether $A \cap B = \emptyset$,
in which case Bob shall accept;
otherwise,
that is,
if there is some index $i \in [u]$ such that $x_i \in A \cap B$,
then Bob shall reject.
It is known that Alice shall send $\Omega(u)$ bits in order for Bob to be correct with high probability~\cite{kalyanasundaram1992probabilistic}.

We first describe the reduction for \textsc{Approval-CC};
that is,
given an instance of \textsc{Set Disjointness},
we construct a vote stream for \textsc{Approval-CC},
as follows.
we create an election with $u + 1$ candidates,
where for each $x_i$ ($i \in [u]$) we create a corresponding candidate $c_i$,
and we have another candidate $d$.
Then,
Alice inserts two voters, $v_1$ and $v_1'$, to the vote stream,
where both $v_1$ and $v_1'$ are approving the candidates corresponding to the elements in $A$
(that is, $v_1 = v_1' = \{c_i : x_i \in A\}$.
Then,
Bob inserts two voters, $v_2$ and $v_2'$, to the vote stream,
where,
similarly,
$v_2 = v_2' = \{c_i : x_i \in B\}$.
Finally,
Bob inserts three voters,
$v_3, v_4, v_5$,
all of which approve only the candidate $d$.
This finishes the description of the reduction.

For example,
letting $U = \{x_1, x_2, x_3\}$ (thus, $u = 3$),
$A = \{x_2\}$, and $B = \{x_1, x_2\}$,
we will have that $v_1$ and $v_1'$ both approve $c_2$,
$v_2$ and $v_2'$ both approve $c_1$ and $c_2$,
and $v_3$, $v_4$, and $v_5$ all approve $d$.

We assume,
towards a contradiction,
that there is a streaming algorithm for Approval-CC which uses $o(m)$ space.
We use that algorithm with $k = 1$ and $\epsilon = 1/7$.
Notice that if $A \cap B = \emptyset$,
then each candidate $c_i$ covers at most $2$ voters,
while if there is some index $i \in [u]$ such that $x_i \in A \cap B$,
then the candidate $c_i$ covers $4$ voters.
Irrespectively,
the candidate $d$ covers $3$ voters.
Thus,
the streaming algorithm would declare $d$ as the winner if and only if $A$ and $B$ are disjoint,
contradicting the lower bound for \textsc{Set Disjointness}.

As for Approval-M,
notice that in the reduction described above the size $k$ of the committee is $1$.
In this case,
Approval-CC and Approval-M are equivalent,
thus the reduction transfers to Approval-M as it is.
\end{proof}

It turns out that with some modifications,
the reduction described in the proof of Theorem~\ref{theorem:lowerbound1}
can be made to work also for Borda-CC and Borda-M.

\begin{theorem}\label{theorem:lowerbound1}
  There is an $\epsilon > 0$ such that any $\epsilon$-approximate streaming algorithm for Borda-CC or Borda-M
  needs $\Omega(m)$ space.
\end{theorem}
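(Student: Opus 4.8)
The plan is to adapt the \textsc{Set Disjointness} reduction from the proof of the previous theorem, replacing approval votes with Borda preference orders that simulate the same covering behavior. First I would keep the same candidate set, namely $u+1$ candidates $c_1,\ldots,c_u$ (one for each element $x_i$) together with a special candidate $d$, but I would also add a pool of $u$ \emph{dummy} candidates $e_1,\ldots,e_u$ whose only role is to occupy low positions in the preference orders so that the relative Borda scores among $\{c_1,\ldots,c_u,d\}$ can be controlled. Alice, given $A\subseteq U$, inserts a constant number of voters (say two) whose preference orders rank the candidates $\{c_i : x_i\in A\}$ in the very top positions, then $d$, then the remaining $c_j$'s and the dummies in some fixed order; Bob, given $B\subseteq U$, inserts two voters ranking $\{c_i : x_i\in B\}$ on top, then $d$, then the rest. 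Finally Bob inserts a constant number of ``ballast'' voters ranking $d$ first, followed by everything else in a fixed order. The key point is that, for $k=1$, the Borda-CC (equivalently Borda-M, since $k=1$) score of a singleton committee $\{c\}$ is just $\sum_{v} \beta(v,c) = \sum_v (m - \pos_v(c))$, a fixed linear function of the positions.

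The main step is then a counting argument: I would choose the number of Alice-voters, Bob-voters, and ballast-voters (all constants, independent of $u$) and the exact placement of candidates within each order so that the total Borda score of $d$ strictly exceeds the total Borda score of every $c_i$ precisely when $A\cap B=\emptyset$, and is strictly dominated by some $c_i$ precisely when $x_i\in A\cap B$. Concretely, when $x_i$ is in both sets, $c_i$ is ranked near the top by all four of Alice's and Bob's voters, earning roughly $4(m-O(1))$ score; when $x_i$ lies in at most one set it sits near the top in at most two of those voters and far down in the others, while $d$ sits in a fixed middle-ish position in all of Alice's and Bob's voters and at the very top of the ballast voters. By tuning the number of ballast voters one can place $d$'s total score strictly between the ``at most one set'' value and the ``both sets'' value for the $c_i$'s, with a constant additive gap. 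That constant gap, divided by the (constant) total number of voters, yields the required constant $\epsilon$: changing fewer than $\epsilon n$ votes cannot flip which candidate wins, so an $\epsilon$-winning committee must correctly be $\{d\}$ in the disjoint case and $\{c_i\}$ in the intersecting case.

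Having established this, I would finish exactly as before: an $o(m)$-space $\epsilon$-approximate streaming algorithm for Borda-CC would let Alice run it on her portion of the stream, send the $o(m)=o(u)$-bit memory state to Bob, who finishes the computation and reads off whether the winner is $d$ or some $c_i$, thereby solving \textsc{Set Disjointness} with high probability using $o(u)$ communication, contradicting the $\Omega(u)$ lower bound of Kalyanasundaram and Schnitger. Since $k=1$ makes Borda-CC and Borda-M coincide (every voter is assigned to the unique committee member, and the M-assignment constraint is vacuous), the same reduction gives the bound for Borda-M as well.

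The hard part will be pinning down the arithmetic of the position assignments so that the winner genuinely separates the two cases with a \emph{constant} additive margin that does not shrink as $u$ (and hence $m$) grows: one must be careful that the $m-\pos_v(c)$ terms, which themselves grow with $m$, cancel appropriately between $d$ and the $c_i$'s (this is what the dummy candidates $e_j$ are for — they absorb the large but case-independent part of the score), leaving only an $O(1)$ difference that depends on whether $x_i\in A\cap B$. Once the bookkeeping is set up so that the case-dependent part is a bounded number of ``top vs.\ bottom'' swaps worth of score, the rest of the argument is routine.
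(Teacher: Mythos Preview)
Your overall strategy---reduce from \textsc{Set Disjointness}, use $k=1$ so that Borda-CC and Borda-M coincide, and argue that an $o(m)$-space streaming algorithm would yield an $o(u)$-bit one-way protocol---is exactly what the paper does. The differences are in the execution, and the paper's is tighter.

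First, the paper avoids ballast voters entirely: it uses only \emph{two} voters, one from Alice and one from Bob, and $3u$ (not $u$) dummy candidates. The key trick you are groping for in your last paragraph is this: Alice pads with $u-|A|$ dummies \emph{between} her $A$-candidates and $d$, so that $d$ always sits at position exactly $u+1$ regardless of $|A|$; Bob does the same (with the dummies in reversed order so no dummy accumulates high score from both voters). This pins $d$'s Borda score at exactly $6u$, independent of the instance. Then any $c_i$ with $x_i\in A\cap B$ scores at least $6u+2$, while any $c_i$ in at most one of the sets scores at most $5u-1$ and every dummy scores at most $5u$. No ballast is needed, and the separation is $\Theta(u)$ on one side and $\geq 2$ on the other. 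Your sketch, by contrast, places $d$ immediately after the $A$-block, so $d$'s position---and hence its score---depends on $|A|$ and $|B|$; fixing that is precisely the ``cancellation'' problem you flag, and the dummy-padding idea resolves it cleanly.

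Second, your derivation of $\epsilon$ is not quite right: a single vote change can move a Borda score by up to $m-1$, so ``score gap divided by number of voters'' is not the relevant quantity. The correct (and simpler) argument is that the total number of voters $n$ is a constant, so any fixed $\epsilon<1/n$ forces $\lfloor\epsilon n\rfloor=0$, meaning an $\epsilon$-winning committee must be an actual winning committee. The paper just takes $\epsilon=1/3$ with $n=2$.
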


\begin{proof}
We again reduce from \textsc{Set Disjointness}
where Alice (Bob) is given a subset $A \subseteq U$ ($B \subseteq U$),
for $U = \{x_1, \ldots, x_u$\},
and Alice and Bob shall decide together whether $A \cap B = \emptyset$
(see the proof of Theorem~\ref{theorem:lowerbound1} for a more detailed description of \textsc{Set Disjointness}).

We describe first the reduction for \textsc{Borda-CC};
that is,
given an instance of \textsc{Set Disjointness},
we construct a vote stream for \textsc{Borda-CC},
as follows.
We create an election with $4u + 1$ candidates,
where for each $x_i$ ($i \in [u]$) we create a corresponding candidate $c_i$;
we have another candidate $d$;
and another $3u$ dummy candidates $d_1, \ldots, d_{3u}$.

Corresponding to her set $A$,
Alice inserts one voter $v_1$ to the vote stream,
ranking first those $|A|$ candidates $c_i$ which correspond to the elements $x_i$ in $A$,
then $u - |A|$ dummy candidates $d_1, \ldots, d_{u - |A|}$,
then $d$,
then the remaining $2u + |A|$ dummy candidates $d_{u - |A| + 1}, \ldots, d_{3u}$,
and ranking last those $u - |A|$ candidates $c_i$ which correspond to the elements $x_i$ not in $A$.
Bob behaves quite similarly,
by inserting one voter $v_2$ to the vote stream,
ranking first those $|B|$ candidates $c_i$ which correspond to the elements $x_i$ in $B$,
then $u - |A|$ dummy candidates $d_{3u}, \ldots, d_{3u - |B| + 1}$ (notice the change of order of the dummy candidates with respect to $v_1$),
then $d$,
then the remaining $2u + |B|$ dummy candidates $d_{2u - |B|}, \ldots, d_{1}$ (notice again the change of order),
and ranking last those $u - |B|$ candidates $c_i$ which correspond to the elements $x_i$ not in $B$.
This finishes the description of the reduction.
For example, letting $U = \{x_1, x_2, x_3\}$ (thus, $u = 3$),
$A = \{x_2\}$, and $B = \{x_1, x_2\}$,
we will have that
$v_1 : c_2 \pref d_1 \pref d_2 \pref d \pref d_3 \pref d_4 \pref d_5 \pref d_6 \pref d_7 \pref d_8 \pref d_9 \pref c_1 \pref c_3$
and
$v_2 : c_1 \pref c_2 \pref d_9 \pref d \pref d_8 \pref d_7 \pref d_6 \pref d_5 \pref d_4 \pref d_3 \pref d_2 \pref c_1 \pref c_3$.

We argue that $d$ is a Borda winner in the reduced election if and only if $A \cap B = \emptyset$.
Let us denote the Borda score of a candidate $c$ in the election containing the voters $v_1$ and $v_2$ by $s(c)$.
For the dummy candidates we have that $s(d_i) \leq 5u$ (for any $i \in [3u]$);
this can be seen by observing that the dummy candidates achieve maximum score in the extreme case where $A = B = \emptyset$,
in which $d_i$ is getting $4u - i$ points from $v_1$ and another $u + i$ points from $v_2$.

Now,
consider a candidate $c_i$ corresponding to an element $x_i$ which appears only in one of the sets, either $A$ or $B$;
without loss of generality, let $c_i$ be a candidate corresponding to an element $x_i$ such that $x_i \in A$ and $x_i \notin B$.
Then,
we have that $c_i$ gets at most $4u$ points from $v_1$ and at most $u - 1$ points from $v_2$.
Thus, we conclude that $s(c_i) \leq 5u - 1$.
Similarly,
consider a candidate $c_i$ corresponding to an element $x_i$ which appears both in $A$ and $B$.
Then,
we have that $c_i$ gets at least $3u + 1$ points from each of $v_1$ and $v_2$.
Thus,
we conclude that $s(c_i) \geq 6u + 2$.

Finally,
notice that,
irrespective of the contents of $A$ and $B$,
it holds that $s(d) = 6u$ .
Therefore,
following the computation described in the last paragraph,
we conclude that $d$ is a Borda winner if and only if $A$ and $B$ are disjoint.
So,
assuming,
towards a contradiction,
that there is a streaming algorithm for Borda-CC which uses $o(m)$ space,
we use that algorithm with $k = 1$ and $\epsilon = 1/3$.
Since the streaming algorithm would declare $d$ as the winner if and only if $A$ and $B$ are disjoint,
it would contradict the lower bound for \textsc{Set Disjointness}.

As for Borda-M,
notice that in the reduction described above the size $k$ of the committee is $1$.
In this case,
Borda-CC and Borda-M are equivalent,
thus the reduction transfers to Borda-M as it is.
\end{proof}

We continue by observing the following lower bound, with respect to the required approximation $\epsilon$
(notice that the following theorem is also a corollary of~\cite[Theorem 10]{bhattacharyya2016optimal}).

\begin{theorem}
  For any $\epsilon > 0$,
  any $\epsilon$-approximate streaming algorithm for Approval-CC
  needs $\Omega(\epsilon^{-1})$ space.
\end{theorem}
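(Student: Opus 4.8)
The plan is to reduce from the \textsc{Set Disjointness} problem in communication complexity, in the same spirit as the proofs of Theorem~\ref{theorem:lowerbound1}, but now choosing the universe size $u$ to grow like $\Theta(\epsilon^{-1})$ rather than being a fixed constant. Since Alice must send $\Omega(u)$ bits for \textsc{Set Disjointness} over a universe of size $u$~\cite{kalyanasundaram1992probabilistic}, a streaming algorithm whose memory state after reading the first part of the stream is forwarded to a second player needs $\Omega(u) = \Omega(\epsilon^{-1})$ space, which is exactly the claim. We may assume $\epsilon$ is below some absolute constant, since otherwise $\Omega(\epsilon^{-1}) = O(1)$ and there is nothing to prove.

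First I would fix $\epsilon$, set $u = \lfloor c/\epsilon \rfloor$ for a small constant $c$, fix a large constant $N$, and map a \textsc{Set Disjointness} instance $A, B \subseteq \{x_1, \dots, x_u\}$ to a vote stream for Approval-CC with $k = 1$ and $m = u + 1$ candidates $c_1, \dots, c_u, d$ as follows: Alice appends $N$ voters with vote $\{c_i\}$ for each $x_i \in A$; then Bob appends $N$ voters with vote $\{c_i\}$ for each $x_i \in B$, together with $\lceil 3N/2 \rceil$ voters with vote $\{d\}$. Then $\score_E(\{d\})$ is about $3N/2$, while $\score_E(\{c_i\})$ equals $0$, $N$, or $2N$ according to whether $x_i$ lies in neither, exactly one, or both of $A$ and $B$, and the stream length is $n = \Theta(Nu)$. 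A short calculation (in which $N$ cancels, so that $N$ can be taken large enough for the $1 - O(1/n)$ success probability to exceed, say, $2/3$) shows that for a suitable choice of $c$ we have $\epsilon n < N/4$.

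Next I would check the two cases. If $A \cap B = \emptyset$ then every $\score_E(\{c_i\}) \le N$, so $\{d\}$ is the winner and leads every other singleton by at least $N/2$; since the cheapest way to promote some $\{c_i\}$ is to rewrite $\{d\}$-voters as $\{c_i\}$-voters, which shrinks the gap by only $2$ per altered vote, no $\{c_i\}$ can be made into a (co-)winner using at most $\epsilon n < N/4$ alterations, so $\{d\}$ is the \emph{unique} $\epsilon$-winning committee and the algorithm must output it (with high probability). If $A \cap B \neq \emptyset$ then $\score_E(\{c_i\}) = 2N > \score_E(\{d\})$ for some $i$, and by the same counting $\{d\}$ trails by at least $N/2 > 2\epsilon n$, hence $\{d\}$ is not $\epsilon$-winning and the algorithm does not output it (with high probability). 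Therefore ``the algorithm outputs $\{d\}$'' holds iff $A \cap B = \emptyset$, which converts the algorithm into a one-way protocol for \textsc{Set Disjointness} using as many bits as the algorithm uses space, and the lower bound $\Omega(u) = \Omega(\epsilon^{-1})$ follows.

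The step that needs care is the arithmetic tuning in the middle: one must choose $c$ and $N$ so that the margin $N/2$ between $\{d\}$ and the best $\{c_i\}$ strictly exceeds twice the budget of $\epsilon n$ alterable votes while keeping $u = \Theta(\epsilon^{-1})$, and one must confirm that in the disjoint case no committee other than $\{d\}$ is $\epsilon$-winning — which is easy here only because $k = 1$ makes every committee a singleton, leaving merely $m$ candidates to rule out. Everything else — the one-way reduction, the cited $\Omega(u)$ bound, and disposing of the $\epsilon = \Omega(1)$ regime as trivial — is routine.
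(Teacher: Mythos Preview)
Your argument is correct but differs from the paper's. The paper gives essentially a one-line reduction from the \textsc{$\ell_1$-Heavy Hitters} streaming problem: each of the $m$ item types becomes a candidate, each stream item becomes a voter approving only its own type, and with $k=1$ and $\phi = 1/2$ an $\epsilon$-approximate Approval-CC algorithm solves $\epsilon$-Heavy Hitters, whose $\Omega(\epsilon^{-1})$ space lower bound is quoted as a black box. You instead re-run the \textsc{Set Disjointness} construction of Theorem~\ref{theorem:lowerbound1} with universe size $u = \Theta(\epsilon^{-1})$ and tune the vote multiplicities so that the $\epsilon n$ alteration budget cannot flip the outcome. The paper's route is shorter because it offloads all the work to the cited Heavy Hitters bound; your route is more self-contained, uses only the communication primitive already introduced in the paper, and makes explicit that the $\Omega(m)$ and $\Omega(\epsilon^{-1})$ bounds stem from the same source. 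One small technicality worth patching: in your construction the stream length $n$ depends on $|A|$ and $|B|$, so if the streaming algorithm is handed $n$ in advance (as the paper's upper bounds assume), Alice cannot initialise it; padding each side to exactly $Nu$ voters with empty-approval ballots, which leave every committee's score unchanged, fixes this without disturbing your arithmetic.
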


\begin{proof}
We reduce from the \textsc{$\ell_1$-Heavy Hitters} problem,
which,
given a stream containing $n$ items,
each item is of one type out of $m$ item types,
an approximation parameter $\epsilon$, and a further parameter $\phi$,
asks for returning all items which occur at least $\phi n$ times,
while not returning any item which occurs less than $(\phi - \epsilon)n$ times.
A lower bound of $O(\epsilon^{-1})$ is known for \textsc{$\ell_1$-Heavy Hitters}~\cite{bhattacharyya2016optimal}.

Given an instance of \textsc{$\ell_1$-Heavy Hitters},
we create an instance for Approval-CC, as follows.
For each item type,
we create a candidate.
For each item in the stream,
we create a voter approving only the candidate corresponding to its item type.
We set $k = 1$, keep the same $\epsilon$, and set $\phi = 1/2$.
This finishes the description of the reduction.
Correctness and space complexity follows immediately.
\end{proof}

The reader might notice that the lower bounds presented in this section are not tight.
We leave the task of closing the gap between our upper bounds and lower bounds to future research.


\section{Discussion and Outlook}\label{section:outlook}

We have described streaming algorithms which find approximate winners for several well-known proportional representation multiwinner voting rules.
Below we mention some extensions to our model,
discuss the usefulness of our results,
and mention several avenues for future research.

\mypara{More general models}
In this paper we concentrated on a simple streaming model where
(1) each item in the stream is a voter,
(2) there are no assumptions on the order by which the voters arrive to the stream,
and
(3) the goal is to compute an approximate winner at the end of the stream.

There are other relevant models, which we mention below.

\begin{itemize}

\item In the \emph{sliding windows} model, the goal is to compute an approximate winner
with respect to the last $t$ elements in the stream, for some given $t$.
Since our streaming algorithms are based on sampling,
and sampling from a sliding window can be done efficiently~\cite{babcock2002sampling},
our streaming algorithms extend to this model as well.
This model is useful for identifying emerging trends.

\item It is possible to use our streaming algorithms not only to compute an approximate winner at the end of the stream,
but,
since they are based on sampling,
they can be used to compute an approximate winner at any time during the stream.

\item Our streaming algorithms extend also to situations where we do not know the number $n$ of the voters a-priori,
as is apparent by a recent result~\cite{bhattacharyya2016optimal},
and since our streaming algorithms are based on sampling.

\item Consider situations where a voter might gradually approve more candidates.
A corresponding stream model might be that each item in the stream is a tuple $(v_i, c_j)$,
where an item $(v_i, c_j)$ means that voter $v_i$ have just decided to approve candidate $c_j$.
Such a stream model might model online shopping websites,
where an item $(v_i, c_j)$ would arrive to the stream whenever the person $v_i$ decided to search for the product $c_j$.
Importantly,
since we can decide at the beginning of the stream which voters to sample,
it follows that our upper bounds also extend to this, more general model.

\end{itemize}

\mypara{Less general models}
It might be interesting to study models where we assume some structure in the stream.
Specifically,
one might consider \emph{uniform streams},
where the voters are not arriving in an arbitrary (possibly adversarial) order,
but in a random order, by choosing a random permutation uniformly at random.
The hope is that for such uniform streams it might be possible to design streaming algorithms with better space complexity.
Indeed,
we believe that,
at least for uniform streams,
there are streaming algorithms with better space complexity for \emph{round-based} voting rules,
such as the greedy versions of Chamberlin--Courant and Monroe~\cite{sko-fal-sli:c:multiwinner}
(in short, one might sample several subelections, and use each subelection for a different round).

Such results would be relevant also for situations without huge number of voters,
but with time constraints; consider the following example (which we thank an anonymous reviewer for suggesting it).
A distinguished speaker is to give the same talk at $k$ different dates,
and,
in order to maximize the total number of attendees,
an online scheduling poll is created in order to decide upon the dates.
The problem is that we have to decide upon the dates very soon,
so we cannot wait for everybody to answer;
our sampling-based streaming algorithms (and possibly even better algorithms assuming stream uniformity)
could tell us how many voters we need in the scheduling poll.

Another restricted model might be to consider restricted domains,
thus not considering all possible elections,
but only those elections which adhere to some restricted domains,
such as single peaked domains and single crossing domains.
It is not clear whether imposing structural constraints on the elections would lower the needed space complexity.

\mypara{Other multiwinner voting rules}
Indeed,
streaming algorithms for other multiwinner voting rules deserve to be studied as well.
We specifically mention Single Transferable Vote (STV) which also aims at proportional representation.
Naturally,
there are other multiwinner voting rules which do not aim at proportional representation;
we mention
$k$-best rules,
committee scoring rules,
and various extensions to Condorcet consistent voting rules,
as some important families of multiwinner voting rules.

\bibliographystyle{abbrv}
\bibliography{bib}

\end{document}